\documentclass[letterpaper, 10 pt, conference]{ieeeconf}  % Comment this line out if you need a4paper

\IEEEoverridecommandlockouts                              % This command is only needed if 
                                                          % you want to use the \thanks command

\overrideIEEEmargins                                      % Needed to meet printer requirements.
\usepackage{amsmath,amssymb,amsfonts}
\usepackage{stackengine}
\usepackage{mathrsfs} % For \mathscr
\usepackage{mathtools}
\usepackage{cite}
\usepackage{algorithmic}
\usepackage{graphicx}
\usepackage{textcomp}
\usepackage{siunitx}
\usepackage{pgfplots} % needed for https://github.com/matlab2tikz/matlab2tikz
\pgfplotsset{compat=newest}
\usepackage{tikz}
\usepackage{svg}
\usepackage[acronym]{glossaries} % acronyms
\usepackage[hidelinks, raiselinks=true, bookmarksnumbered=true, pdfusetitle, hyperfootnotes=false]{hyperref}
\usepackage[capitalise]{cleveref}
\crefname{equation}{}{} % "(1)" instead of "eq. (1)"
\def\BibTeX{{\rm B\kern-.05em{\sc i\kern-.025em b}\kern-.08em
    T\kern-.1667em\lower.7ex\hbox{E}\kern-.125emX}}
%\markboth{\journalname, VOL. XX, NO. XX, XXXX 2017}

% \theoremstyle{definition}
\newtheorem{theorem}{Theorem}

\newtheorem{assumption}{Assumption}

\newcommand{\ubar}[1]{\stackunder[1.2pt]{$#1$}{\rule{.8ex}{.075ex}}}

\newacronym{MPC}{MPC}{model predictive control}
\newacronym{NN}{NN}{neural network}
\newacronym{LUT}{LUT}{look-up table}

\tikzstyle{block} = [draw, rectangle,
minimum height=1cm, minimum width=3.5cm, text width=3.5cm, align=center]
\tikzstyle{sum} = [draw, fill=blue!20, circle, node distance=1cm]
\tikzstyle{input} = [coordinate]
\tikzstyle{output} = [coordinate]
\tikzstyle{pinstyle} = [pin edge={to-,thin,black}]

\title{\LARGE \bf
Contract-based hierarchical control using \\ predictive feasibility value functions
}

\author{Felix Berkel$^{1}$, Kim Peter Wabersich$^{1}$, Hongxi Xiang$^{2}$, Elias Milios$^{1}$% <-this % stops a space
\thanks{$^{1}$Elias Milios, Kim P. Wabersich, and Felix Berkel are with the Corporate Research of Robert Bosch GmbH, 71272 Renningen, Germany.
        Email: \{Elias.Milios, KimPeter.Wabersich, Felix.Berkel\}@de.bosch.com}%
\thanks{$^{2}$Hongxi Xiang is a student of ETH Zurich, Switzerland.
Email: xiang-hongxi@outlook.com}%
}

\begin{document}

\maketitle
\thispagestyle{empty}
\pagestyle{empty}

\begin{abstract}
    Today's control systems are often characterized by modularity and safety requirements to handle complexity, resulting in the use of hierarchical control structures. Although hierarchical model predictive control offers favorable properties, achieving a provably safe, yet modular design remains a challenge. 
    This paper introduces a contract-based hierarchical control strategy to improve the performance of control systems facing challenges related to model inconsistency and independent controller design across hierarchies. 
    We consider a setup where a higher-level controller generates references that affect the constraints of a lower-level controller, which is based on a soft-constrained MPC formulation.
   The optimal slack variables of the lower-level MPC serve as the basis for a contract that allows the higher-level controller to assess the feasibility of the reference trajectory without exact knowledge of the model, constraints, and cost of the lower-level controller. 
    To ensure computational efficiency while maintaining model confidentiality, we propose using an explicit function approximation, such as a neural network, to represent the cost of optimal slack values.  
    The approach is tested for a hierarchical control setup consisting of a planner and a motion controller as commonly found in autonomous driving.
\end{abstract}

\section{Introduction}
\label{sec:introduction}
%
% [Introduction to HiMPC]
Hierarchical control structures decompose complex control problems into smaller, manageable sub-problems, each handled by a dedicated controller. 
This modularization is crucial for addressing the inherent complexity and stringent safety requirements of modern dynamical systems, simplifying development and improving robustness. 
In this context, hierarchical \gls{MPC} has emerged as a prominent approach. \gls{MPC}, an advanced control method, predicts future system behavior using a model and optimizes control inputs over a finite horizon, often under constraints. 
Hierarchical \gls{MPC} extends this by coordinating multiple \gls{MPC} controllers across different time scales, integrating decision-making across various abstraction levels while ensuring critical safety. 
This is particularly relevant for systems with multi-rate dynamics or plant-wide optimization. 
Its diverse applications include process control \cite{backx2000integration}, water networks \cite{ocampo2012hierarchical}, power grids \cite{berkel2013load}, and autonomous vehicles \cite{kogel2023safe}. For a comprehensive survey, see \cite{scattolini2009architectures}.

% [Challenges in HiMPC]
One challenge with hierarchical \gls{MPC} is the use of different models at various layers of the control architecture~\cite{scattolini2009architectures}. 
At the higher levels, simpler models are utilized, which are suitable for capturing the slow system dynamics or for facilitating optimization over a long planning horizon. 
In contrast, at the lower levels, more detailed models are implemented to accurately represent the fast dynamics of the system. 
This layered approach can lead to significant discrepancies between the planned behavior at higher levels and the actual behavior of the system, potentially resulting in suboptimal performance and safety risks, as noted in~\cite{barcelli2010hierarchical}.
Another challenge arises from the independent design of higher-level and lower-level controllers, often carried out by different teams or even different companies, which is, for instance, common in the automotive industry. 
This independence can lead to significant integration effort and may even prohibit the combination of certain components, as highlighted in~\cite{bathge2018contract}. 
The situation is further complicated when knowledge exchange regarding models may be restricted due to intellectual property restrictions. 

\par
\paragraph*{Contributions}
This paper presents a contract-based hierarchical control strategy to address the challenges of modularization and safety. 
At the higher level, a controller generates a reference sequence for the lower-level controller to track, which also affects the constraints for the lower-level controller. The lower-level controller is designed as a soft-constrained \gls{MPC}, such that the corresponding optimal slack variables indicate feasibility of a given state measurement and reference sequence.
As a result, the higher-level controller can assess feasibility of the reference trajectory for the underlying control problem through the optimal slack variables directly, even in the presence of model discrepancies.
Specifically, we formulate an optimization problem to determine the optimal slack variables for a given state and reference sequence. 
We refer to its optimal value as the predictive feasibility value function.

Since the predictive feasibility value function is implicitly defined and requires solving an optimization problem for evaluation, we propose an efficient approximation using an explicit representation, such as a \gls{NN}. 
This approach enables the higher-level controller to efficiently evaluate feasibility, rendering the optimization problem computationally tractable for real-time implementation. 
Furthermore, this approximation allows the higher-level controller to operate without access to detailed information about the lower-level optimization problem. Instead, it can rely on an abstract description, the approximation of the predictive feasibility value function. 
Notably, while predictive value functions and their approximation are a known concept in control engineering (see, e.g., \cite{chatzikiriakos2024learning}), their application as a contract in hierarchical control, as presented here, is novel. 
This effectively addresses the independent design problem, as the lower-level controller may prefer not to disclose its optimization problem due to intellectual property concerns. 
Notably, this method avoids the shortcomings of existing contract-based approaches as outlined in the next paragraph.

\paragraph*{Related Work}
Hierarchical control has evolved significantly. Early approaches, like those in \cite{scattolini2007hierarchical, picasso2010mpc}, focused on robust higher-level controllers accommodating lower-level decisions. 
More recent mission-based hierarchical MPC frameworks, e.g., \cite{koeln2018two, koeln2020vertical}, prioritize recursive feasibility over stability. 
A commonality is that both levels use the same model, ensuring prediction consistency. 
However, in applications like autonomous driving, higher-level planners often cannot accommodate complex lower-level vehicle models, limiting applicability. 
\cite{barcelli2010hierarchical, bathge2018contract} explore contract-based designs where a higher-level MPC and a lower-level linear controller operate with the lower-level model undisclosed to the higher level. 
These methods, however, have several shortcomings: they are restricted to linear systems and controllers, which limits performance for constrained systems, introduce conservatism through robust design, and require bounded changes in upper-level reference signals. 
These factors render them unsuitable for complex applications like planning and motion control in autonomous driving.
\section{Problem formulation}\label{sec:problem_formulation}
\begin{figure}[t!]
    \centering
    \begin{tikzpicture}[scale = 1]
        % blocks
        \node [block, minimum height=1.35cm, align=center] at (0cm, 0cm) (planner) {higher-level controller\\ \eqref{eq:opt_planning}/\eqref{eq:opt_planning_contract_based}};
        \node [block, minimum height=1.35cm, align=center] at (0cm, -2cm) (lowlevelcontrol) {lower-level controller\\
        \eqref{eq:soft_constrained_mpc}};
        \node [block, minimum height=1.35cm] at (0cm, -4cm) (system) {system \eqref{eq:system}};
        %  arrows between blocks
        \draw [->,very thick]  (system.east) -- +(0.5cm,0cm) |- (lowlevelcontrol.east);
        \draw [->,very thick]  (system.east) -- +(0.5cm,0) |- (planner.east);
        \draw [->,very thick]  (planner.south) -- (lowlevelcontrol.north);
        \draw [->,very thick]  (lowlevelcontrol.south) -- (system.north);
        % arrow from lower-level controller to higher-level controller
        \draw [->, very thick, gray, dashed] (lowlevelcontrol.west) -- +(-0.5cm, 0) |- (planner.west);
        % captions
        \node at (3.5cm, -2.0cm) {$x(k)$, $x^{\mathrm{H}}(k)$};
        \node at (-2.75cm, -1.0cm) {\textcolor{gray}{$h_{\mathrm{C}}$}};
        \node at (0.5cm, -3.0cm) {$u(k)$};
        \node at (0.5cm, -1.0cm) {$r^{\mathrm{H,*}}_{\cdot|k_{\mathrm{H}}}$};
    \end{tikzpicture}
    \caption[Planner]{Considered controller architecture: The higher-level controller generates references  $r^{\mathrm{H,*}}_{\cdot|k_{\mathrm{H}}}$ which are tracked by the lower-level controller that applies inputs $u(k)$ to the system. 
    The gray part represents a contract designed offline (before operation), allowing the higher-level controller to assess feasibility of a given trajectory for the lower-level controller during operation. 
    $x(k)$ and $x^{\mathrm{H}}(k)$ are the states of the lower- and higher-level controller, respectively.}
    \label{fig:ADarchitecture}
\end{figure}
% [System dynamics]
We consider nonlinear discrete-time systems of the form
\begin{equation}
    \begin{aligned}
    \label{eq:system}
    x(k+1) &= f(x(k),u(k)),
\end{aligned}
\end{equation}
where $x(k)\in\mathbb R^{n_x}$ represents the system state, $u(k)\in\mathbb R^{n_u}$ denotes the control input, and $k \in \mathbb{Z}_{\mathrm{\geq0}}$ is the sampling instant.
The discrete-time system is derived from a continuous-time system using a sampling time $T_{\mathrm{L}} \in \mathbb{R}_{> 0}$.
% [Constraints]
System~\cref{eq:system} is subject to state and input constraints of the form
\begin{align}\label{eq:constraints}
    x(k) \in \mathcal X, \qquad 
    u(k) \in \mathcal U,
\end{align}
where  $\mathcal X:= \{x|c_{\mathrm{x}}(x) \leq 0 \} \subseteq \mathbb R^{n_x}$ and $\mathcal U:=\{u|c_{\mathrm{u}}(u) \leq 0 \} \subset \mathbb R^{n_u}$. 
%

% [Controller architecture]
The considered controller architecture is hierarchical, consisting of a higher-level controller that provides reference trajectories to a lower-level controller, which then applies the inputs to the system, see Fig.~\ref{fig:ADarchitecture}.
The lower-level controller operates with sampling time $T_{\mathrm{L}}$, while the higher-level controller is sampled with $T_{\mathrm{H}} := N_{\mathrm{L}} \cdot T_{\mathrm{L}}$ with $N_{\mathrm{L}} \in \mathbb{Z}_{\geq 1}$. 
The primary objective of the higher-level controller is to optimize the system's performance over longer time horizons, whereas the lower-level controller is responsible for compensating for fast disturbances. 
This hierarchical structure is commonly found in various applications where the overall task of the controller is distributed between the higher and lower levels, see for instance~\cite{scattolini2009architectures}. 

% [Planner]
We first consider a mission-based scenario, where the system operates for a finite duration starting from $k=0$ and ending at $N_{\mathrm{H}}T_{\mathrm{H}}$ with mission horizon $N_{\mathrm{H}} \in \mathbb{Z}_{\geq 1}$ and the higher-level controller plans for the entire mission. 
It computes a reference trajectory $r^{\mathrm{H}}\in\mathbb R^{n_{\mathrm{Hr}}}$, which is subsequently passed to the lower-level controller. 
The reference is an input to the nonlinear discrete-time model 
\begin{equation}
    \label{eq:gen_planner_system}
    \begin{aligned}
        x^{\mathrm{H}}(k_{\mathrm{H}}+1) &= f^{\mathrm{H}}(x^{\mathrm{H}}(k_{\mathrm{H}}), r^{\mathrm{H}}(k_{\mathrm{H}})),
%        \\
%        r^{\mathrm{H}}(k_{\mathrm{H}}) &= g^{\mathrm{H}}(x^{\mathrm{H}}(k_{\mathrm{H}}), u^{\mathrm{H}}(k_{\mathrm{H}})), 
    \end{aligned}
\end{equation}
which is utilized by the higher-level controller. 
Moreover, $x^{\mathrm{H}}(k_{\mathrm{H}})\in\mathbb R^{n_{\mathrm{Hx}}}$ is the system state, 
%$r^{\mathrm{H}}(k_{\mathrm{H}})\in\mathbb R^{n_{\mathrm{Hr}}}$ is the control input, 
and $k_{\mathrm{H}}\in\mathbb{Z}_{\geq 0}$ is the sampling index of the higher-level controller. 
The model can be an under-sampled or a reduced-order version of the original model \eqref{eq:constraints}. 
We specifically assume the existence of a mapping from the state $x$ to the state $x^\mathrm{H}$  represented as
\begin{equation}
    \label{eq:structure_state}
%    [x^\mathrm{H,\top}, r^\mathrm{H,\top}]^\top := g(x).
    x^\mathrm{H} := g(x).
\end{equation}
%
%An example is that the state and input of the higher-level controller is a part of the state $x(k)$ such that $x^{\mathrm{H}} = T^{\mathrm{H}}_{\mathrm{x}}x$ 
%and $r^{\mathrm{H}} = T^{\mathrm{H}}_{\mathrm{r}}x$ where $T^{\mathrm{H}}_{\mathrm{x}} \in \mathbb{R}^{n_{\mathrm{Hx}} \times n_{\mathrm{x}}}$ and $T^{\mathrm{H}}_{\mathrm{r}} \in \mathbb{R}^{n_{\mathrm{Hr}} \times n_{\mathrm{x}}}$ is a selection matrices. 
Moreover, the system \eqref{eq:gen_planner_system} is subject to the constraints $(x^{\mathrm{H}}(k_{\mathrm{H}}), r^{\mathrm{H}}(k_{\mathrm{H}})) \in \mathcal{Z}^{\mathrm{H}}.$ 
At the beginning of the mission, i.e., $k = k_{\mathrm{H}} = 0$, the higher-level controller solves the optimization problem
\begin{subequations}
    \label{eq:opt_planning}
    \begin{align}
        &\underset{x^{\mathrm{H}}_{ \cdot|k_{\mathrm{H}}}, 
        r^{\mathrm{H}}_{\cdot|k_{\mathrm{H}}}}{\mathrm{minimize}} & & J^{\mathrm{H}}(x^{\mathrm{H}}_{ \cdot|k_{\mathrm{H}}}, r^{\mathrm{H}}_{\cdot|k_{\mathrm{H}}})\\
        & \mathrm{ subject\;to} & &  x^{\mathrm{H}}_{ 0|k_{\mathrm{H}}} = x^{\mathrm{H}}(k_{\mathrm{H}}), \label{eq:planner_constraint_start}\\
        & & & x^{\mathrm{H}}_{ N_{\mathrm{H}}|k_{\mathrm{H}}} \in \mathcal{X}^{\mathrm{H}}_{\mathrm{f}}, \\
        & & & \mathrm{for\;} m = \{ 0,\ldots,N_{\mathrm{H}}-1\}: \nonumber \\
        & & & \quad x^{\mathrm{H}}_{ m+1|k_{\mathrm{H}}} = f^{\mathrm{H}}(x^{\mathrm{H}}_{m|k_{\mathrm{H}}}, r^{\mathrm{H}}_{ m|k_{\mathrm{H}}}),\\
        & & & \quad(x^{\mathrm{H}}_{m|k_{\mathrm{H}}}, r^{\mathrm{H}}_{m|k_{\mathrm{H}}}) \in \mathcal{Z}^{\mathrm{H}}. \label{eq:planner_constraint_end}
    \end{align}
\end{subequations}
It optimizes the reference sequence $r^{\mathrm{H}}_{\cdot|k_{\mathrm{H}}}$ and state sequence $x^{\mathrm{H}}_{\cdot|k_{\mathrm{H}}}$ over the mission horizon $N_{\mathrm{H}}$.
$x_{m|k_{\mathrm{H}}}$ denotes the predicted state for step $m+k_{\mathrm{H}}$ made at time step $k_{\mathrm{H}}$. 
Although we assume $k_{\mathrm{H}} = 0$ for the mission-based setup, we introduce this notation here for use in the receding-horizon context discussed in Section \ref{sec:receding_horizon}. 
The cost function to be minimized is represented by $J^{\mathrm{H}}(x^{\mathrm{H}}_{ \cdot|k_{\mathrm{H}}}, r^{\mathrm{H}}_{\cdot|k_{\mathrm{H}}})$, while $\mathcal{X}_{\mathrm{f}}^{\mathrm{H}}\subseteq\mathbb R^{n_{\mathrm{Hx}}}$ denotes a target or terminal set.
The optimal reference trajectory is denoted as $r^{\mathrm{H},*}_{\cdot|k_{\mathrm{H}}}$.
Due to the potential nonlinearity of the dynamics and the possibility of non-convex constraints and cost functions, the overall optimization problem is generally non-convex. This is often the case in various applications, including planning for autonomous driving~\cite{paden2016survey}.

% [lower-level controller]
The task of the lower-level controller is the execution of the plan by the higher-level controller. It uses the model~\eqref{eq:system} and receives the reference $r^{\mathrm{H},*}_{\cdot|k_{\mathrm{H}}}$ from the higher-level controller. 
The reference is held constant between sampling intervals, i.e., $r_{mN_{\mathrm{L}}+l|k_{\mathrm{H}}} := r^{\mathrm{H},*}_{m|k_{\mathrm{H}}},  \forall l \in \{0,\ldots,N_{\mathrm{L}}-1\}, m \in \{0, \ldots, N_{\mathrm{H}}-1\}$.
%
%\begin{align*}
%    r_{mN_{\mathrm{L}}+l|k_{\mathrm{H}}} := r^{\mathrm{H},*}_{m|k_{\mathrm{H}}},  
%    \forall l \in \{0,\ldots,N_{\mathrm{L}}-1\},
%    m \in \{0, \ldots, N_{\mathrm{H}}-1\}.
%\end{align*}
In addition to the constraints \eqref{eq:constraints}, the lower-level is subject to reference-dependent constraints of the form
\begin{align}\label{eq:reference_dependent_constraints}
    c_{\Delta \mathrm{x}}(x(k), r^{\mathrm{H},*}_{\cdot|\mathrm{k_{\mathrm{H}}}}) \leq 0, \forall k \in  \{0,\ldots,N_{\mathrm{H}}N_{\mathrm{L}}-1\}.
\end{align}
These constraints may include limitations on the deviation from a reference position in motion control applications, see Section \ref{sec:application_autonomous_driving}, or different operational modes for the lower-level controller, as selected by the higher-level, e.g., as discussed in~\cite{kogel2023safe} for autonomous navigation. 
While we focus on reference-dependent state constraints, this framework can also be extended to include reference-dependent input constraints, such as those found in process control~\cite{picasso2010mpc} or power systems control applications~\cite{berkel2013load}.

% [Goals]
The goals of the presented design approach are twofold: first, to design a hierarchical control system that ensures safety through constraint satisfaction despite model differences between levels; and second, to achieve a modular, contract-based design.  
This design aims to minimize the shared information between the layers, ensuring that the model and optimization problem of the lower-level controller remain undisclosed to the higher-level controller and vice versa. 
Instead, a contract $h_{\mathrm{C}}$ is exchanged, allowing the higher level to check feasibility of the lower-level controller.
\section{Contract-based hierarchical control}\label{sec:contract_based_design}
We first introduce the design of the lower-level controller and state the corresponding feasibility problem. 
Subsequently we extend the higher-level controller to assess feasibility of the lower-level controller. 
Finally, we discuss the usage of the approximation of the predictive feasibility value function as contract $h_{\mathrm{C}}$ between the controller levels.
\subsection{Lower-level control based on soft-constrained \gls{MPC}}

The lower-level controller is formulated based on a soft-constrained \gls{MPC} approach. 
At every time step $k \in \{0,\ldots,N_{\mathrm{H}}N_{\mathrm{L}}-1\}$, it solves the optimization problem 
\begin{subequations}
	\label{eq:soft_constrained_mpc}
	\begin{align}
    &\underset{x_{\cdot|k}, u_{\cdot|k}, \xi_{\cdot|k}}{\mathrm{minimize}} & & J_{\mathrm{MPC}}(x_{\cdot|k}, u_{\cdot|k},        r^{\mathrm{H},*}_{\cdot|k_{\mathrm{H}}}) + w_{\xi}J_{\xi}(\xi_{\cdot|k}) \\
        & \mathrm{ subject\;to} & &  x_{0|k} = x(k), \label{eq:soft_constrained_mpc_constraint_start}\\
        & & & \text{for\;} l \in \{0,\ldots,N(k)-1\}:\\
        & & & \quad x_{l+1|k} = f( x_{l|k}, u_{l|k}), \label{eq:soft_constrained_mpc_model}\\
        & & & \quad c_{\mathrm{x}} (x_{l|k}) \leq \xi_{l|k}^{\mathrm{x}}, \label{eq:soft_constrained_mpc_constraint_1}\\
        & & & \quad c_{\Delta\mathrm{x}} (x_{l|k}, r^{\mathrm{H},*}_{\cdot|k_{\mathrm{H}}}) \leq  \xi_{l|k}^{\Delta\mathrm{x}},  \label{eq:soft_constrained_mpc_constraint_2}\\
        & & & \quad c_{\mathrm{u}} (u_{l|k}) \leq 0,   \label{eq:soft_constrained_mpc_constraint_3}\\
        & & & \quad \xi_{l|k} = [\xi_{l|k}^{\mathrm{x},\top}, \xi_{l|k}^{\Delta\mathrm{x},\top}]^\top \geq 0. \label{eq:soft_constrained_mpc_constraint_end}
	\end{align}
\end{subequations}
The sequences $x_{\cdot|k}$, $u_{\cdot|k}$, and $\xi_{\cdot|k}$ represent the state, input, and slack variables, respectively, over the prediction horizon. 
The optimization problem has a shrinking horizon with length $N(k) := N_{\mathrm{H}}N_{\mathrm{L}}-k$, as the optimization problem starts at time $k$ and ends at the end of the mission.
The constraints \eqref{eq:soft_constrained_mpc_constraint_1} and \eqref{eq:soft_constrained_mpc_constraint_2} are relaxed using slack variables $\xi_{l|k}\geq 0$. 
% cost function
The cost function consists of two parts: An \gls{MPC} cost function and a feasibility cost function. The \gls{MPC} cost function is given by 
$J_{\mathrm{MPC}}(x_{\cdot|k}, u_{\cdot|k},        r^{\mathrm{H}}_{\cdot|k_{\mathrm{H}}})
    := \sum_{l = 0}^{N(k)-1} \ell(x_{l|k}, u_{l|k}, r^{\mathrm{H}}_{\cdot|k_{\mathrm{H}}})$,
    %\\
    %&+V_{\mathrm{f}}(x_{N(k)|k}, r_{N(k)|k}, r(N(k))),
%
where $\ell(x, u, r)$ denotes the stage cost. 
The feasibility cost function penalizes the slack variables
$\xi_{l|k}$ and is defined as
%
%\begin{align*}
$J_{\xi}(\xi_{\cdot|k}) := %\|\xi_{N|k}\|_1 + 
\sum_{l = 0}^{N(k)-1} \|\xi_{l|k}\|_1,$
%\end{align*}
%
where $\|v\|_1$ denotes the one norm of a vector $v$.

% relation soft and hard constraints
By selecting a sufficiently large weight factor $w_{\xi}$, the soft-constrained \gls{MPC} problem yields an optimal input sequence identical to that of its hard-constrained counterpart, provided the latter is feasible, resulting in $\xi_{l|k}^{*} = 0$ for all $l \in \{0,\ldots,N(k)-1\}$, see, e.g.,~\cite{kerrigan2000soft} for further details. 
Additionally, the soft-constrained \gls{MPC} problem remains feasible for states where the hard-constrained \gls{MPC} problem is infeasible, due to the relaxation of state constraints, leading to $\xi_{l|k}^{*} > 0$ for some $l \in \{0,\ldots,N(k)-1\}$.
% Evaluation at the end
Given the state $x(k)$ and the reference $r^{\mathrm{H},*}_{\cdot|k_{\mathrm{H}}}$, the soft-constrained \gls{MPC} problem \eqref{eq:soft_constrained_mpc} is solved 
online to obtain the optimal input sequence $u_{\cdot|k}^{*}$ and the corresponding slack variable sequence $\xi_{\cdot|k}^{*}$. The first input 
$u_{0|k}^{*}$ is applied to the system~\eqref{eq:system}.

As detailed in the next section, the soft-constrained formulation allows to identify infeasible state and reference combinations through the optimal slack variables, which serves as a basis for the desired contract $h_{\mathrm{C}}$.

% some definitions
For further analysis, we introduce the set of input sequences
%
%\begin{align*}
    $\mathcal{U}_{k}(x(k), r_{\cdot|k_{\mathrm{H}}}^{\mathrm{H}}):= \left\{u_{\cdot|k}| \eqref{eq:soft_constrained_mpc_constraint_start}-\eqref{eq:soft_constrained_mpc_constraint_end} \wedge \xi_{\cdot|k} = 0\right\},$
%\end{align*}
for which the constraints are fulfilled without constraint relaxation at time $k$.
\subsection{Feasibility-aware higher-level control}
To establish the contract, we define the feasibility problem
\begin{subequations}
    \label{eq:feasibility_problem}
    \begin{align}
    h^*(x(k), r^{\mathrm{H}}_{\cdot|k_{\mathrm{H}}}) := &\underset{{
 x_{\cdot|k}, u_{\cdot|k}, \xi_{\cdot|k}}}{\mathrm{min}} & & J_{\xi}(\xi_{\cdot|k})\\
    & \mathrm{ subject\;to} & & \eqref{eq:soft_constrained_mpc_constraint_start} - \eqref{eq:soft_constrained_mpc_constraint_end}.
    \end{align}
\end{subequations}
corresponding to \eqref{eq:soft_constrained_mpc}. 
The feasibility problem checks whether for a given output trajectory $r^{\mathrm{H},*}_{\cdot|k_{\mathrm{H}}}$ and state $x(k)$ there exists an input sequence for the lower-level controller which complies with the constraints \eqref{eq:constraints} and \eqref{eq:reference_dependent_constraints} assuming an evolution according to the model \eqref{eq:system} over the complete mission horizon, i.e., $N(k) := N_{\mathrm{H}}N_{\mathrm{L}}$. 
We call $h^*(x(k), r^{\mathrm{H},*}_{\cdot|k_{\mathrm{H}}})$ the predictive feasibility value function. 
In case constraints are violated, the predictive feasibility value function $h^*(x(k), r^{\mathrm{H},*}_{\cdot|k_{\mathrm{H}}})$ is positive. 
When all constraints can be satisfied, the value function $h^*(x(k), r^{\mathrm{H},*}_{\cdot|k_{\mathrm{H}}}) = 0$. Note that $h^*(x(k), r^{\mathrm{H},*}_{\cdot|k_{\mathrm{H}}}) = 0$ implies that optimization problem \eqref{eq:soft_constrained_mpc} is feasible with $\xi_{l|k}^{*} = 0$ for all $l \in \{0,\ldots,N_{\mathrm{H}}N_{\mathrm{L}}\}$. 

We leverage the predictive feasibility value function as a foundation for a contract that enables the higher-level controller to assess the feasibility of trajectories for the lower-level controller. 
To achieve this, we incorporate the value function into the cost of the higher-level optimization problem \eqref{eq:opt_planning}, resulting in
\begin{subequations}
    \label{eq:opt_planning_contract_based}
    \begin{align}
        &\underset{x^{\mathrm{H}}_{ \cdot|k_{\mathrm{H}}},  r^{\mathrm{H}}_{\cdot|k_{\mathrm{H}}}}{\mathrm{minimize}} & & J^{\mathrm{H}}(x^{\mathrm{H}}_{ \cdot|k_{\mathrm{H}}}, r^{\mathrm{H}}_{\cdot|k_{\mathrm{H}}}) + 
        w_{\mathrm{h}}h^*(x(k), r^{\mathrm{H}}_{\cdot|k_{\mathrm{H}}})\\
        & \mathrm{ subject\;to} & &  \eqref{eq:planner_constraint_start} - \eqref{eq:planner_constraint_end}.
    \end{align}
\end{subequations}
%
%If the higher-level controller cannot find a feasible trajectory, it can minimize the amount of constraint violation by choosing the trajectory with low cost $h^*(x(k), r^{\mathrm{H},*}_{\cdot|k_{\mathrm{H}}})$. 
The weighting factor $w_{\mathrm{h}} \in \mathbb{R}_{\geq 0}$ can be used to trade-off the violation of the constraint in the lower-level and the minimization of costs of $J^{\mathrm{H}}(x^{\mathrm{H}}_{ \cdot|k_{\mathrm{H}}}, r^{\mathrm{H}}_{\cdot|k_{\mathrm{H}}})$. 

For further analysis, we introduce the set 
$\mathcal{H} := \left\{(x(k), r^{\mathrm{H}}_{\cdot|k_{\mathrm{H}}}) | h^*(x(k), r^{\mathrm{H}}_{\cdot|k_{\mathrm{H}}}) = 0\right\},$
which is the set of states and references for which there exists a solution such that the predictive feasibility value function is zero.
Next, we summarize the main result for the mission-based setup where we state constrain satisfaction of~\eqref{eq:constraints} through recursive feasibility of the lower-level controller.
\begin{theorem}
    Consider the higher-level optimization problem~\eqref{eq:opt_planning_contract_based} at time $k = k_{\mathrm{H}} = 0$. 
    If the     optimal solution $r^{\mathrm{H},*}_{\cdot|0}$ is such that $(x(0),r^{\mathrm{H},*}_{\cdot|0}) \in  \mathcal{H}$, 
    % then there exist input sequences $u_{\cdot|k} \in \mathcal{U}_{k}(x(k), r_{\cdot|0}^{\mathrm{H,*}})$ for all $k =  0, \ldots,  N_{\mathrm{H}}N_{\mathrm{L}}-1$ for the lower-level controller \eqref{eq:soft_constrained_mpc}.
    then there exist feasible input sequences for the lower-level controller with slacks equal to zero for the entire mission.
\end{theorem}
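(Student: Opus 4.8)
The plan is to show that the zero‑slack feasibility certified once, at $k=k_{\mathrm{H}}=0$, by the membership $(x(0),r^{\mathrm{H},*}_{\cdot|0})\in\mathcal{H}$ propagates to every subsequent lower‑level time step via a shrinking‑horizon recursive‑feasibility argument, exploiting that the model \eqref{eq:system} is deterministic and that the reference $r^{\mathrm{H},*}_{\cdot|0}$ is held fixed over the whole mission (recall that $k_{\mathrm{H}}=0$ throughout the mission‑based setup). First I would unpack the hypothesis: by definition of $\mathcal{H}$, $(x(0),r^{\mathrm{H},*}_{\cdot|0})\in\mathcal{H}$ is equivalent to $h^*(x(0),r^{\mathrm{H},*}_{\cdot|0})=0$, and since $J_{\xi}$ is a sum of $1$‑norms of the nonnegative slack vectors, the optimal value of \eqref{eq:feasibility_problem} vanishes if and only if \eqref{eq:soft_constrained_mpc} at time $0$ (with horizon $N(0)=N_{\mathrm{H}}N_{\mathrm{L}}$) admits a feasible point all of whose slacks are zero. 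I therefore fix such sequences $x^{\circ}_{\cdot|0},u^{\circ}_{\cdot|0}$ satisfying $x^{\circ}_{0|0}=x(0)$, $x^{\circ}_{l+1|0}=f(x^{\circ}_{l|0},u^{\circ}_{l|0})$, $c_{\mathrm{x}}(x^{\circ}_{l|0})\le0$, $c_{\Delta\mathrm{x}}(x^{\circ}_{l|0},r^{\mathrm{H},*}_{\cdot|0})\le0$, and $c_{\mathrm{u}}(u^{\circ}_{l|0})\le0$; equivalently $u^{\circ}_{\cdot|0}\in\mathcal{U}_{0}(x(0),r^{\mathrm{H},*}_{\cdot|0})$.

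Next I would run an induction over $k\in\{0,\dots,N_{\mathrm{H}}N_{\mathrm{L}}-1\}$ with the hypothesis that the closed‑loop state obeys $x(k)=x^{\circ}_{k|0}$ and that the tail $(x^{\circ}_{k|0},\dots,x^{\circ}_{N_{\mathrm{H}}N_{\mathrm{L}}|0})$ together with the inputs $(u^{\circ}_{k|0},\dots,u^{\circ}_{N_{\mathrm{H}}N_{\mathrm{L}}-1|0})$ is a feasible point of \eqref{eq:soft_constrained_mpc} at time $k$ with all slacks zero. The base case is the construction above. For the step, note that the horizon of \eqref{eq:soft_constrained_mpc} at time $k$ is exactly $N(k)=N_{\mathrm{H}}N_{\mathrm{L}}-k$, so each of its stage‑$l$ constraints is precisely a constraint already verified at stage $k+l$ of the time‑$0$ problem, the reference entering $c_{\Delta\mathrm{x}}$ is the same fixed sequence $r^{\mathrm{H},*}_{\cdot|0}$, and $x^{\circ}_{k|0}=x(k)$ matches \eqref{eq:soft_constrained_mpc_constraint_start}; hence the tail is zero‑slack feasible at time $k$. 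Because a zero‑slack feasible point exists, the exact‑penalty property of the soft‑constrained formulation (choose $w_{\xi}$ large enough, cf.~\cite{kerrigan2000soft}) guarantees $\xi^{*}_{\cdot|k}=0$ at the optimizer, so the applied input $u^{*}_{0|k}$ is part of a zero‑slack trajectory and $x(k+1)=f(x(k),u^{*}_{0|k})=x^{\circ}_{k+1|0}$, closing the induction; alternatively, letting the lower‑level controller apply $u^{\circ}_{\cdot|0}$ open‑loop makes the claim transparent by determinism. Concatenation then yields the asserted zero‑slack feasible input sequence over the entire mission.

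I do not expect a real obstacle here. The only things needing care are the index bookkeeping --- verifying that the time‑$k$ shrinking horizon coincides exactly with the tail of the time‑$0$ horizon, so that no terminal or boundary constraint is left unaccounted for --- and being explicit about the intended reading of ``slacks equal to zero for the entire mission'': namely that $\mathcal{U}_{k}(x(k),r^{\mathrm{H},*}_{\cdot|0})\neq\emptyset$ along the induced trajectory for all $k$, and, if a strictly closed‑loop interpretation is wanted, invoking the exact‑penalty property so that the lower‑level optimizer actually returns a zero‑slack solution rather than merely admitting one.
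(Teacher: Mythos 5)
Your proposal is correct and follows essentially the same route as the paper: both unpack $(x(0),r^{\mathrm{H},*}_{\cdot|0})\in\mathcal{H}$ into a zero-slack feasible trajectory at $k=0$ and then use the tail of that trajectory as a zero-slack feasible candidate for each shrinking-horizon problem at $k=1,\ldots,N_{\mathrm{H}}N_{\mathrm{L}}-1$. Your additional care about the induction on $x(k)=x^{\circ}_{k|0}$ and the exact-penalty argument for the closed-loop interpretation only makes explicit what the paper's one-line shift construction leaves implicit.
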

\begin{proof}
    From $(x(0),r^{\mathrm{H},*}_{\cdot|0}) \in  \mathcal{H}$ follows that there exists an input sequence $u_{\cdot|0}^* \in \mathcal{U}_{0}(x(0), r_{\cdot|0}^{\mathrm{H,*}})$. Input sequences $u_{\cdot|k} \in \mathcal{U}_{k}(x(k), r_{\cdot|0}^{\mathrm{H,*}})$ for $k =  1, \ldots,  N_{\mathrm{H}}N_{\mathrm{L}}-1$ can be constructed according to $u_{\cdot|k +1} = \left\{ u^*_{l|k}\right\}_{l = 1, \ldots,  N_{\mathrm{H}}N_{\mathrm{L}}-1}$ completing the proof.
\end{proof}

\subsection{Contract design using value function approximation}
\label{subsec:value_function_approx}
In the following discussion, we propose using an explicit function approximation, denoted as $h_{\mathrm{C}}(x(k), r^{\mathrm{H},*}_{\cdot|k_{\mathrm{H}}})$, for the implicit predictive feasibility value function $h^*(x(k), r^{\mathrm{H},*}_{\cdot|k_{\mathrm{H}}})$, such that $h_{\mathrm{C}}(x(k), r^{\mathrm{H},*}_{\cdot|k_{\mathrm{H}}}) \approx h^*(x(k), r^{\mathrm{H},*}_{\cdot|k_{\mathrm{H}}})$. 
This approximation, which may take the form of an \gls{NN} or \gls{LUT}, serves as the contract between the lower-level and higher-level controllers. 
The first reason for this approach is that directly incorporating the value function $h^*(x(k), r^{\mathrm{H},*}_{\cdot|k_{\mathrm{H}}})$ in the cost function, renders \eqref{eq:opt_planning_contract_based} a nested optimization problem, which is computationally demanding to evaluate in real-time. 
To enable efficient implementation, the value function can be approximated using an explicit approximator. 
For its parameterization, the high-level planner has to provide trajectories $r^{\mathrm{H}}$ to the lower-level controller. The lower-level controller then uses these trajectories $r^{\mathrm{H}}$ together with sampled states $x$ and solves the feasibility problem to obtain the predictive feasibility value function $h^*$. 
All this data is used to generate a training data set $(x, r^{\mathrm{H}},h^*)_{i
=
1,\ldots, N_
{\mathrm{data}}}$
which can be used for parameterizing the approximator using supervised learning. 
The trained model $h_{\mathrm{C}}$ is then transferred to the higher-level controller.
When the higher-level controller is executed online, the function approximation can be employed to evaluate the feasibility of the generated trajectory. 
In cases where the optimization scheme of the higher-level controller is solved using sampling-based methods, as, e.g., common in planning in autonomous driving, see~\cite{paden2016survey}, due to the non-convex nature of the optimization problem, the feasibility of references for the lower-level controller can be easily assessed through evaluation of the function approximation. 
If the optimization scheme works in a gradient-based manner, the gradients of the function approximation can be analytically computed. 

The second reason for utilizing the value function approximation as the contract $h_{\mathrm{C}}$ is that it provides the higher-level controller with an abstract representation of the optimization problem's solution, rather than the complete optimization problem that includes intricate details about costs, models, and constraints. 
This approach simplifies the integration of lower-level controllers and addresses potential limitations on knowledge exchange regarding models due to intellectual property concerns, all while minimizing the need for detailed understanding of lower-level specifics.

\section{Extension to receding-horizon control}
\label{sec:receding_horizon}
In this section, we extend our analysis beyond mission-based scenarios to consider a higher-level controller that operates in a receding-horizon fashion allowing for continuous operation of the overall controller architecture.

In this set-up, the higher-level controller problem \eqref{eq:opt_planning_contract_based} is solved every time step $k \mod N_{\mathrm{L}} = 0$, that is, $k = nk_{\mathrm{H}}$ with $n \in \mathbb{Z}_{\geq 0}$ and the terminal set is chosen as a steady state manifold, i.e.,
\begin{align}
\label{eq:planner_terminal_eq_constraint}
\mathcal{X}^{\mathrm{H}}_{\mathrm{f}} := \left\{x^{\mathrm{H}}_{\mathrm{s}}| x^{\mathrm{H}}_{\mathrm{s}} = f^{\mathrm{H}}(x^{\mathrm{H}}_{\mathrm{s}}, r^{\mathrm{H}}_{\mathrm{s}}), 
(x^{\mathrm{H}}_{\mathrm{s}}, r^{\mathrm{H}}_{\mathrm{s}}) \in \mathcal{Z}^{\mathrm{H}}\right\}.
\end{align}
Moreover, we make the following assumptions on the steady states which ensures that there exists a feasible steady state for the lower-level controller model for feasible steady states of the model for the higher-level controller.
\begin{assumption}
\label{as:steady_state}
For any steady state pair $(x^{\mathrm{H}}_{\mathrm{s}},r^{\mathrm{H}}_{\mathrm{s}})$ with $x^{\mathrm{H}}_{\mathrm{s}} \in \mathcal{X}^{\mathrm{H}}_{\mathrm{f}}$, there exists a steady state pair $(x_{\mathrm{s}},u_{\mathrm{s}})$ such that $x_{\mathrm{s}} = f(x_{\mathrm{s}}, u_{\mathrm{s}})$ with $x_{\mathrm{s}} \in \mathcal{X}$, $u_{\mathrm{s}} \in \mathcal{U}$, and $c_{\Delta \mathrm{x}}(x_{\mathrm{s}}, r^{\mathrm{H}}_{\mathrm{s}}) \leq 0$ holds.
\end{assumption}
% 
%This can be demonstrated, for instance, in the autonomous driving scenario discussed in Section \ref{sec:application_autonomous_driving}.

Furthermore, the feasibility problem is adapted to
\begin{subequations}
    \label{eq:feasibility_problem_extended}
    \begin{align}
%    h^*(x(k), r^{\mathrm{H}}_{\cdot|k_{\mathrm{H}}}) := 
    &\underset{{ x_{\cdot|k},  u_{\cdot|k}, \xi_{\cdot|k}}}{\mathrm{minimize}} & & J_{\xi}(\xi_{\cdot|k})\\
    & \mathrm{ subject\;to} & & \eqref{eq:soft_constrained_mpc_constraint_start} - \eqref{eq:soft_constrained_mpc_constraint_end}, \label{eq:soft_constrained_mpc_extended_constraint_1}\\
    & & & \mathrm{for}\;m\in\{1,\ldots,N_{\mathrm{H}}\}: \nonumber \\ 
    & & & \xi^{\mathrm{g}}_{m|k} \leq g(x_{mN_{\mathrm{L}}|k}) - x^{\mathrm{H}}_{m|k_{\mathrm{H}}} \leq \xi^{\mathrm{g}}_{m|k} ,\label{eq:soft_constrained_mpc_extended_constraint_end}
    \end{align}
\end{subequations}
where compared to \eqref{eq:feasibility_problem} constraints \eqref{eq:soft_constrained_mpc_extended_constraint_end} are added to ensure consistency between higher and lower-level controller at the sampling times of the higher-level controller.
$\xi^{\mathrm{g}}_{m|k}$ serves as a slack variable for these constraints and is included in the slack sequence $\xi_{\cdot|k}$.
The feasibility problem is again solved over the complete horizon, i.e., with $N := N_{\mathrm{H}}N_{\mathrm{L}}$. 
From a computational perspective, this is not an issue, as the problem is approximated offline and online the efficient explicit function approximation is evaluated.

The lower-level controller operates on a cyclic horizon $N(k): = N_{\mathrm{L}}-k \mod N_{\mathrm{L}}$ which is generally shorter than in the mission-based case. 
It is evaluated every sampling instant $k$ and solves the optimization problem
\begin{subequations}
	\label{eq:soft_constrained_mpc_extended}
	\begin{align}
    &\underset{x_{\cdot|k}, u_{\cdot|k}, \xi_{\cdot|k}}{\mathrm{minimize}} & & J_{\mathrm{MPC}}(x_{\cdot|k}, u_{\cdot|k}, r^{\mathrm{H}}_{\cdot|k_{\mathrm{H}}}) + w_{\xi}J_{\xi}(\xi_{\cdot|k}) \\
        & \mathrm{ subject\;to} & &  \eqref{eq:soft_constrained_mpc_constraint_start} - \eqref{eq:soft_constrained_mpc_constraint_end},\\
        & & & \xi^{\mathrm{g}}_{k} \leq g(x_{N(k),k}) - x^{\mathrm{H}}_{1,k_{\mathrm{H}}} \leq \xi^{\mathrm{g}}_{k}, \label{eq:mpc_terminal_eq_constraints}
	\end{align}
\end{subequations}
where compared to \eqref{eq:soft_constrained_mpc} constraint \eqref{eq:mpc_terminal_eq_constraints} is added to ensure consistency between higher and lower-level controller at the next sampling time of the higher-level controller. 
$\xi^{\mathrm{g}}_{k}$ serves as a slack variable for this constraint and is included in the slack sequence $\xi_{\cdot|k}$.

In this setting, recursive feasibility of both the lower- and higher-level controllers ensures that the constraints in \eqref{eq:constraints} are satisfied at all times.
\begin{theorem}
    Consider the higher-level optimization problem~\eqref{eq:opt_planning_contract_based} with terminal constraint \eqref{eq:planner_terminal_eq_constraint} at time $k = k_{\mathrm{H}}N_{\mathrm{L}}$ with $k_{\mathrm{H}}\in \mathbb{Z}_{\geq0}$. If the 
    optimal solution $(x(k),r^{\mathrm{H},*}_{\cdot|k_{\mathrm{H}}}) \in  \mathcal{H}$, then 
    there exists a solution for the lower-level controller with slack variables equal to zero for all $k =  k_{\mathrm{H}}N_{\mathrm{L}}, \ldots, (k_{\mathrm{H}}+1)N_{\mathrm{L}}-1$. Moreover, there exists a feasible solution for the higher-level controller at time $k_{\mathrm{H}}+1$. %\begin{enumerate}[itemindent=0pt]
    %    \item there exist input sequences $u_{\cdot|k} \in \mathcal{U}_{k}(x(k), r_{\cdot|k_{\mathrm{H}}}^{\mathrm{H,*}})$ for all $k =  k_{\mathrm{H}}N_{\mathrm{L}}, \ldots, (k_{\mathrm{H}}+1)N_{\mathrm{L}}-1$ for the lower-level controller \eqref{eq:soft_constrained_mpc_extended},
    %    \item there exists a feasible solution $r^{\mathrm{H}}_{\cdot|k_{\mathrm{H}}+1}$ with $(x(k+N_{\mathrm{L}}),r^{\mathrm{H}}_{\cdot|k_{\mathrm{H}}+1}) \in  \mathcal{H}$ for the higher-level controller \eqref{eq:opt_planning_contract_based}.
    %\end{enumerate}
\end{theorem}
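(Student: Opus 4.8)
The plan is to establish the two assertions in turn, each by constructing a candidate solution, following the standard receding-horizon recursive-feasibility template but carefully tracking the two-rate cyclic-horizon structure and the inter-layer consistency slacks $\xi^{\mathrm{g}}$.

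\emph{Lower level.} By definition, $(x(k),r^{\mathrm{H},*}_{\cdot|k_{\mathrm{H}}})\in\mathcal{H}$ means the extended slack value problem \eqref{eq:feasibility_problem_extended} admits an optimizer with all slacks, including $\xi^{\mathrm{g}}$, equal to zero over the full horizon $N_{\mathrm{H}}N_{\mathrm{L}}$; denote its state/input part by $(x^{*}_{\cdot|k},u^{*}_{\cdot|k})$. It satisfies \eqref{eq:soft_constrained_mpc_constraint_start}--\eqref{eq:soft_constrained_mpc_constraint_end} with zero slack and $g(x^{*}_{mN_{\mathrm{L}}|k})=x^{\mathrm{H},*}_{m|k_{\mathrm{H}}}$ for $m=1,\dots,N_{\mathrm{H}}$. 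First I would show that truncating this trajectory to its first $N_{\mathrm{L}}$ steps yields a zero-slack feasible point of the lower-level problem \eqref{eq:soft_constrained_mpc_extended} at $k=k_{\mathrm{H}}N_{\mathrm{L}}$, whose cyclic horizon equals $N_{\mathrm{L}}$: the stage constraints \eqref{eq:soft_constrained_mpc_constraint_start}--\eqref{eq:soft_constrained_mpc_constraint_end} are inherited, and the terminal-consistency constraint \eqref{eq:mpc_terminal_eq_constraints} is exactly the $m=1$ instance of \eqref{eq:soft_constrained_mpc_extended_constraint_end}. Since the hard-constrained problem is therefore feasible, the exact-penalty property recalled after \eqref{eq:soft_constrained_mpc} (with $w_{\xi}$ chosen as in the cited reference) implies the realized lower-level optimizer also has zero slack, so applying its first input gives $x(k_{\mathrm{H}}N_{\mathrm{L}}+1)=x^{*}_{1|k}$. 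I would then run an induction over $j=0,\dots,N_{\mathrm{L}}-1$: from a zero-slack optimizer of \eqref{eq:soft_constrained_mpc_extended} at step $k_{\mathrm{H}}N_{\mathrm{L}}+j$ (horizon $N_{\mathrm{L}}-j$), its one-step shift is zero-slack feasible at step $k_{\mathrm{H}}N_{\mathrm{L}}+j+1$ (horizon $N_{\mathrm{L}}-j-1$), because the stage constraints are inherited and the new instance of \eqref{eq:mpc_terminal_eq_constraints} coincides with the previous step's instance, which held. This proves the first assertion, and evaluating \eqref{eq:mpc_terminal_eq_constraints} at $j=N_{\mathrm{L}}-1$ (horizon $1$) additionally yields $g(x((k_{\mathrm{H}}+1)N_{\mathrm{L}}))=x^{\mathrm{H},*}_{1|k_{\mathrm{H}}}$, i.e.\ $x^{\mathrm{H}}(k_{\mathrm{H}}+1)=x^{\mathrm{H},*}_{1|k_{\mathrm{H}}}$ via \eqref{eq:structure_state}.

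\emph{Higher level.} Here I would use the textbook shift-and-append candidate for terminal-equality MPC. Since $x^{\mathrm{H},*}_{N_{\mathrm{H}}|k_{\mathrm{H}}}\in\mathcal{X}^{\mathrm{H}}_{\mathrm{f}}$, there is $r^{\mathrm{H}}_{\mathrm{s}}$ with $x^{\mathrm{H},*}_{N_{\mathrm{H}}|k_{\mathrm{H}}}=f^{\mathrm{H}}(x^{\mathrm{H},*}_{N_{\mathrm{H}}|k_{\mathrm{H}}},r^{\mathrm{H}}_{\mathrm{s}})$ and $(x^{\mathrm{H},*}_{N_{\mathrm{H}}|k_{\mathrm{H}}},r^{\mathrm{H}}_{\mathrm{s}})\in\mathcal{Z}^{\mathrm{H}}$. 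Define $\hat{x}^{\mathrm{H}}_{m|k_{\mathrm{H}}+1}:=x^{\mathrm{H},*}_{m+1|k_{\mathrm{H}}}$ for $m\le N_{\mathrm{H}}-1$, $\hat{x}^{\mathrm{H}}_{N_{\mathrm{H}}|k_{\mathrm{H}}+1}:=x^{\mathrm{H},*}_{N_{\mathrm{H}}|k_{\mathrm{H}}}$, $\hat{r}^{\mathrm{H}}_{m|k_{\mathrm{H}}+1}:=r^{\mathrm{H},*}_{m+1|k_{\mathrm{H}}}$ for $m\le N_{\mathrm{H}}-2$, and $\hat{r}^{\mathrm{H}}_{N_{\mathrm{H}}-1|k_{\mathrm{H}}+1}:=r^{\mathrm{H}}_{\mathrm{s}}$. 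Checking \eqref{eq:planner_constraint_start}--\eqref{eq:planner_constraint_end} termwise: the dynamics and the $\mathcal{Z}^{\mathrm{H}}$-membership are inherited on the shifted part and hold on the appended part by the steady-state property; the terminal constraint holds because the last state is again in $\mathcal{X}^{\mathrm{H}}_{\mathrm{f}}$; and \eqref{eq:planner_constraint_start} holds since $\hat{x}^{\mathrm{H}}_{0|k_{\mathrm{H}}+1}=x^{\mathrm{H},*}_{1|k_{\mathrm{H}}}=x^{\mathrm{H}}(k_{\mathrm{H}}+1)$ by the consistency established in the first part. To strengthen ``feasible'' to ``lies in $\mathcal{H}$'', so that the result can be iterated, I would build a zero-slack point of \eqref{eq:feasibility_problem_extended} at $k_{\mathrm{H}}+1$ from the $N_{\mathrm{L}}$-step shift of $(x^{*}_{\cdot|k},u^{*}_{\cdot|k})$, which already covers the first $(N_{\mathrm{H}}-1)N_{\mathrm{L}}$ steps with the correct consistency, and fill the remaining $N_{\mathrm{L}}$ steps with a lower-level steady state $(x_{\mathrm{s}},u_{\mathrm{s}})$ provided by Assumption~\ref{as:steady_state} for the pair $(x^{\mathrm{H},*}_{N_{\mathrm{H}}|k_{\mathrm{H}}},r^{\mathrm{H}}_{\mathrm{s}})$.

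\emph{Main obstacle.} The two shift arguments are routine; the delicate point is the inter-layer consistency bookkeeping. One must verify that the constraint \eqref{eq:mpc_terminal_eq_constraints} with the cyclic horizon genuinely forces the realized physical state at $(k_{\mathrm{H}}+1)N_{\mathrm{L}}$ to project, via $g$, onto $x^{\mathrm{H},*}_{1|k_{\mathrm{H}}}$, and--for the membership-in-$\mathcal{H}$ refinement--that the steady state from Assumption~\ref{as:steady_state} is compatible through \eqref{eq:structure_state} both with the appended higher-level steady state and with the terminal state $x^{*}_{N_{\mathrm{H}}N_{\mathrm{L}}|k}$ of the shifted lower-level trajectory, so that gluing the two incurs no new slack. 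This compatibility is the only place where the precise relationship between $f$, $f^{\mathrm{H}}$, and $g$ is needed.
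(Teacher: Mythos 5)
Your proof follows essentially the same route as the paper's: the first claim via truncation and one-step shifting of the zero-slack trajectory (which the paper simply delegates to Theorem~1), and the second claim via the consistency constraint \eqref{eq:mpc_terminal_eq_constraints} to obtain $x^{\mathrm{H}}(k_{\mathrm{H}}+1)=x^{\mathrm{H},*}_{1|k_{\mathrm{H}}}$, the standard shift-and-append higher-level candidate, and the $N_{\mathrm{L}}$-step-shifted lower-level trajectory extended by the steady state from Assumption~\ref{as:steady_state} to recover membership in $\mathcal{H}$. The gluing compatibility you flag as the main obstacle is likewise left implicit in the paper's proof, which at that point simply invokes \eqref{eq:soft_constrained_mpc_extended_constraint_end} and Assumption~\ref{as:steady_state}.
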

\begin{proof}
    %\begin{enumerate}[itemindent=0pt]
    %\item 
    The first statement of the theorem  directly follows from Theorem 1.
    %\item 
    
    Next, we consider the second statement.  
    First, we consider feasibility of \eqref{eq:opt_planning_contract_based} at time $k_{\mathrm{H}}+1$. 
    From the first part of the proof follows that $u_{\cdot|(k_{\mathrm{H}}+1)N_{\mathrm{L}}-1} \in \mathcal{U}_{(k_{\mathrm{H}}+1)N_{\mathrm{L}}-1}$. 
    Together with
    constraint \eqref{eq:mpc_terminal_eq_constraints} and $\xi^{\mathrm{g}}_{(k_{\mathrm{H}}+1)N_{\mathrm{L}}-1} = 0$, it follows that $x^{\mathrm{H}}(k_{\mathrm{H}}+1) = x^{\mathrm{H}}_{1|k_{\mathrm{H}}}$. 
    Considering the terminal constraint design \eqref{eq:planner_terminal_eq_constraint} and following standard \gls{MPC} arguments, the shifted sequence $r^{\mathrm{H}}_{\cdot|k_{\mathrm{H}}+1} := \left\{r^{\mathrm{H},*}_{1|k_{\mathrm{H}}}, \ldots, r^{\mathrm{H},*}_{N_{\mathrm{H}}|k_{\mathrm{H}}}, r^{\mathrm{H}}_{\mathrm{s}}\right\}$ satisfies \eqref{eq:planner_constraint_start} - \eqref{eq:planner_constraint_end}.
    
    Next, feasibility of the feasibility problem  \eqref{eq:feasibility_problem_extended} is checked.
    Again following standard MPC arguments, the candidate solution 
    $u_{\cdot|(k_{\mathrm{H}}+1)N_{\mathrm{L}}} := \{u_{k_{\mathrm{H}}N_{\mathrm{L}}|k_{\mathrm{H}}N_{\mathrm{L}}}, \ldots,\underbrace{u_{\mathrm{s}}^{*},\ldots,u_{\mathrm{s}}^{*}}_{N_{\mathrm
        L} steps}\}$,
    satisfies the constraints \eqref{eq:soft_constrained_mpc_extended_constraint_1}-\eqref{eq:soft_constrained_mpc_extended_constraint_end} with $\xi_{l|(k_{\mathrm{H}}+1)N_{\mathrm{L}}} = 0$ for $l = 0,\ldots,N$ and $\xi^{\mathrm{g}}_{k_{\mathrm{H}}|(k_{\mathrm{H}}+1)N_{\mathrm{L}}} = 0$ for $m=1,\ldots,N_\mathrm{H}$ due to \eqref{eq:soft_constrained_mpc_extended_constraint_end} at time $k = k_{\mathrm{H}}N_{\mathrm{L}}$ and Assumption \ref{as:steady_state}. 
    This implies that $(x(k+N_{\mathrm{L}}),r^{\mathrm{H}}_{\cdot|k_{\mathrm{H}}+1}) \in  \mathcal{H}$, completing the proof.
%\end{enumerate}
\end{proof}
\section{Application to autonomous driving}
\label{sec:application_autonomous_driving}
% system model
We illustrate our proposed approach within the context of contract-based planning and motion control for autonomous driving in a mission-based obstacle avoidance scenario. 
To capture the vehicle dynamics, we employ a nonlinear single-track model of the form
\begin{equation}
    \begin{aligned}
    \label{eq:nonlinear_single_track_model}
    \dot{p}_{\mathrm{x}} &= v\cos{(\psi)}, \quad \dot{p}_{\mathrm{y}} = v\sin{(\psi)}, \quad \dot{\psi} = \dot{\psi}, \quad \dot{v} = a,\\
    \ddot{\psi} &= \frac{1}{I_{\mathrm{z}}}[F_{\mathrm{fy}}\cos{(\delta)}l_{\mathrm{f}}-F_{\mathrm{ry}}l_{\mathrm{r}}],\\
    \dot{\beta} &= \frac{1}{mv}[F_{\mathrm{fy}}\cos{(\beta-\delta)}+F_{\mathrm{ry}}\cos{(\beta)}] - \dot{\psi}.
\end{aligned}
\end{equation}
where $p_{\mathrm{x}}$ is the position along the horizontal axis, $p_{\mathrm{y}}$ is the position along the vertical axis, $\psi$ is the orientation angle, $v$ is the velocity, $\beta$ is the side slip angle. 
The inputs are the steering angle $\delta$ and the acceleration $a$, i.e., $u = [\delta, a]^\top$. 
$F_{\mathrm{fy}}(v, \beta,\dot{\psi},\delta)$ is the lateral force on the front tires, $F_{\mathrm{ry}}(v, \beta,\dot{\psi},\delta)$ is the lateral force on the rear tires. 
Both are nonlinear functions of the state. 
$I_{\mathrm{z}}$ is the moment of inertia about the vertical axis, $l_{\mathrm{f}}$ is the distance to the front axle, $l_{\mathrm{r}}$ is the distance to the rear axle, and $m$ is the mass of the vehicle.
For additional details on the model and its parameters, please refer to~\cite{Althoff2017a}. 
The discrete-time model is derived using the Runge-Kutta 4th-order method with sampling time $T_{\mathrm{L}} =  \SI{20}{\milli \second}$.

The planner operates under the assumption of constant orientation and speed. It utilizes the simplified model
\begin{equation}
    \label{eq:planner_model}
    \begin{aligned}
        \dot{p}_{\mathrm{x}} &= v\cos{\psi},\quad
        \dot{p}_{\mathrm{y}} &= v\sin{\psi},
    \end{aligned}
\end{equation}
with state $x^{\mathrm{H}} = [p_{\mathrm{x}}, p_{\mathrm{y}}]^\top$ and input $r^{\mathrm{H}} = [\psi, v]^\top$. 
Note that the states in the model are also part of the state of model~\eqref{eq:nonlinear_single_track_model} and hence the mapping \eqref{eq:structure_state} is given as $g(x):=S^{\mathrm{H}}x$ where $S^{\mathrm{H}}$ is an appropriate selection matrix.  
The discrete-time model is again obtained through the RK4 method, with sampling time $T_{\mathrm{L}}$ and concatenated over $N_{\mathrm{L}} = 50$ steps to yield a higher-level model with a sampling time $T_{\mathrm{H}}$.
The prediction horizon is set to $N_{\mathrm{H}} = 2$. 
The planner incorporates non-convex constraints for obstacle avoidance, expressed as $\|x^{\mathrm{H}}-x_{\mathrm{obstacle}}\| > d_{\mathrm{obstacle}} + d_{\mathrm{max}}$. 
Here, $x_{\mathrm{obstacle}} = [15, 0]^\top$ denotes the obstacle's position, $d_{\mathrm{obstacle}} = 3$ is the obstacle's radius, and $d_{\mathrm{max}} = 2$ signifies the maximum allowed deviation of the controller from the reference provided by the planner. 
The planner's cost function is 
$J^{\mathrm{H}}(x^{\mathrm{H}}_{ \cdot|k_{\mathrm{H}}}, r^{\mathrm{H}}_{\cdot|k_{\mathrm{H}}}) :=  \sum_{m=0}^{N_{\mathrm{H}}} \|x^{\mathrm{H}}_{m|k_{\mathrm{H}}}-x^{\mathrm{target}}\|_{Q^{\mathrm{H}}}^2 + (v^{\mathrm{H}}_{m|k_{\mathrm{H}}}-v^{\mathrm{target}})^2,$
where $\|v\|_{M}:=v^\top M v$, $Q^{\mathrm{H}} = 10 \cdot I_{2\times2}$ is a weight matrix, $x^{\mathrm{target}} = [50, 10]^\top$ is a desired target position, and $v^{\mathrm{target}} = \SI{8.33}{m/s}$ is the the target speed. 
The planning problem \eqref{eq:opt_planning_contract_based} is solved using a sampling-based method. 
For the first prediction step, yaw angle and velocity are Gaussian-sampled. 
The second step's yaw angle and velocity are then computed to reach the target.

% lower-level controller
For the lower-level controller, we use the nonlinear single-track model \eqref{eq:nonlinear_single_track_model}. We consider box constraints on the velocity $\ubar{v} \leq v \leq \bar{v}$ as well as on the inputs $\ubar{\delta} \leq \delta \leq \bar{\delta}$ and $\ubar{a} \leq a \leq \bar{a}$.
The constraints \eqref{eq:reference_dependent_constraints} are derived from the constant speed and yaw angle, utilizing planner model \eqref{eq:planner_model} discretized with $T_{\mathrm{L}}$ to forward propagate to obtain $p^{\mathrm{ref}}_{\mathrm{x},l|\mathrm{k}_{\mathrm{H}}}$ and $p^{\mathrm{ref}}_{y,l|\mathrm{k}_{\mathrm{H}}}$ with which we build the constraint on the path deviation
\begin{align}
    \label{eq:relative_position_constraints}
    (p_{\mathrm{x},l|\mathrm{k}}- p^{\mathrm{ref}}_{\mathrm{x},l|\mathrm{k}_{\mathrm{H}}})^2 +  
    (p_{\mathrm{y},l|\mathrm{k}}-p^{\mathrm{ref}}_{\mathrm{y},l|\mathrm{k}_{\mathrm{H}}})^2 \leq  d_{\mathrm{max}}^2,\\
    \mathrm{for}\; k_{\mathrm{H}} = 0, \; k = 0,\ldots,N_{\mathrm{L}}, \; l = 0, \ldots, N_{\mathrm{H}}N_{\mathrm{L}}-k. \nonumber
\end{align}
We further use $p^{\mathrm{ref}}_{\mathrm{x},l|\mathrm{k}_{\mathrm{H}}}$ and $p^{\mathrm{ref}}_{y,l|\mathrm{k}_{\mathrm{H}}}$ in the stage cost which is selected as $\ell(x_{\cdot|k}, u_{\cdot|k}, r^{\mathrm{H}}_{\cdot|k_{\mathrm{H}}}):= \|x_{l|\mathrm{k}}- x^{\mathrm{ref}}_{l|\mathrm{k}_{\mathrm{H}}}\|^2_Q +\|u_{l|\mathrm{k}}\|^2_R$ 
with reference $x^{\mathrm{ref}}_{l|\mathrm{k}_{\mathrm{H}}} := [p^{\mathrm{ref}}_{\mathrm{x},l|\mathrm{k}_{\mathrm{H}}}, p^{\mathrm{ref}}_{\mathrm{y},l|\mathrm{k}_{\mathrm{H}}}, \psi^{\mathrm{ref}}_{l|\mathrm{k}_{\mathrm{H}}}, 0, 0 , 0]^\top$, $Q = \mathrm{diag}(10, 10, 1, 1, 0,0)$, and $R = \mathrm{diag}(0.1, 1)$.

We approximate the predictive feasibility value function $h^*(x(k), r_{\cdot|k_{\mathrm{H}}})$ by utilizing an \gls{LUT} with linear interpolation. 
For simplicity, assuming a fixed initial vehicle state, the \gls{LUT}'s input is defined by the velocity and angle of the planner's first prediction step, i.e., $r_{0|k_{\mathrm{H}}} \in \mathbb{R}^2$, effectively approximating $h^*(r_{0|k_{\mathrm{H}}})$. 
The \gls{LUT} itself contains 450 tuples of the form $(r_{0|k_{\mathrm{H}}}, h^*(r_{0|k_{\mathrm{H}}}))$, where the values of $r_{0|k_{\mathrm{H}}}$ are generated using Sobol sampling and $h^*$ pre-computed by solving the feasibility optimization problem.

We analyze an obstacle avoidance scenario (Figure \ref{fig:trajectory_planner}), showing sampled trajectories and their associated feasibility values. 
Starting at 
$x(0) = [0,0,-\frac{\pi}{4}, 8.33, 0,0]^\top$, trajectories above the obstacle are infeasible, while some below are feasible. 
Figure \ref{fig:cl_sim_with_collision} shows constraint violations without our contract-based design, contrasting with Figure \ref{fig:cl_sim_wo_collision}, where the contract-based design safely navigates the obstacle. 
Comparing approximated and true feasibility, all trajectories predicted feasible by the approximation are genuinely feasible. 
However, 26.9\% labeled infeasible by the approximation are actually feasible, indicating conservatism of the \gls{LUT} approximation. 
This motivates future work on improved approximations, such as NN-based methods as in \cite{chatzikiriakos2024learning}.
\begin{figure}[h]
    \centering
    \includegraphics[width=0.48\textwidth]{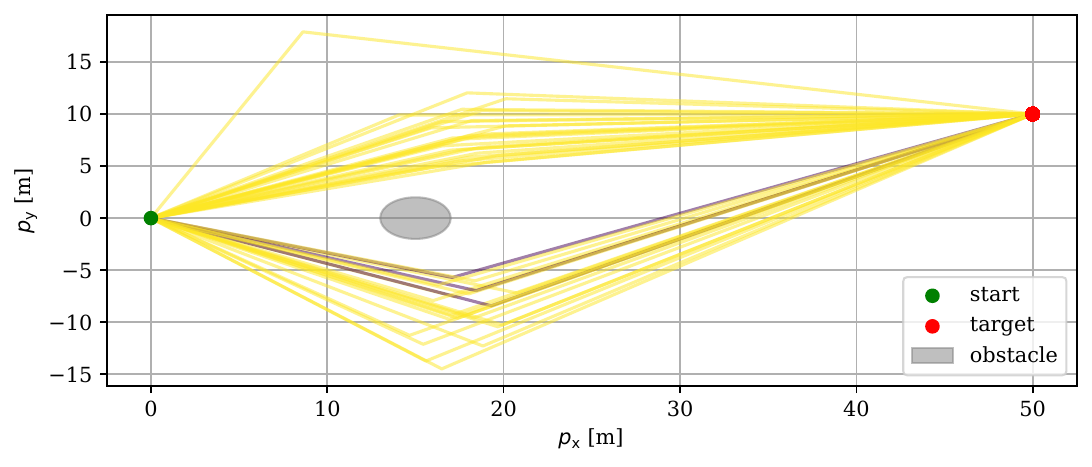}
    \caption{The figure shows the considered scenario with obstacle as gray box, green dot as initial, and red dot as target position. A total of 100 samples are generated by the sampling-based planning algorithm. Only those satisfying the planner's constraints in \eqref{eq:opt_planning} are displayed. Yellow trajectories, where $h^*(x(k),r^{\mathrm{H}}_{\cdot|k_{\mathrm{H}}}) > 0$, lead to a violation of the constraints of the controller. 
    The purple trajectories, where $h_{\mathrm{C}}(x(k),r^{\mathrm{H}}_{\cdot|k_{\mathrm{H}}}) = 0$, satisfy the constraints of the controller.
    }
    \label{fig:trajectory_planner}
\end{figure}
\begin{figure}[h]
    \centering
    \includegraphics[width=0.48\textwidth]{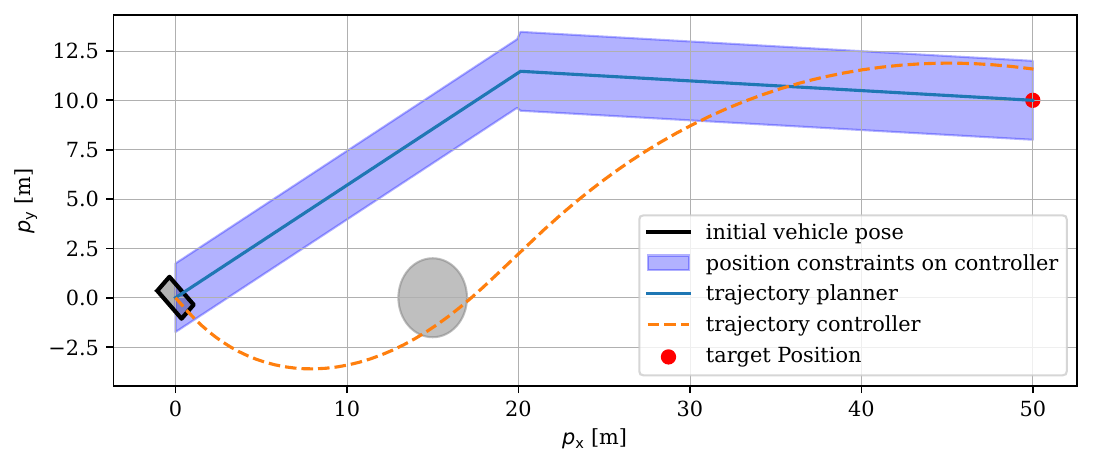}
    \caption{In this figure, we first consider the reference trajectory that a planner without contract-based design would have chosen. The controller trajectory clearly shows that this plan violates the position constraints on the controller (shown as blue area) given by \eqref{eq:relative_position_constraints}, leading to a direct collision with the obstacle. This highlights the necessity of the contract-based approach, as for the proposed reference $h_{\mathrm{C}}(x(k),r^{\mathrm{H}}_{\cdot|k_{\mathrm{H}}}) > 0$ holds and hence this reference could have been readily handled by such a design.}
    \label{fig:cl_sim_with_collision}
\end{figure}
\begin{figure}[h]
    \centering
    % Adjust the width as needed
    \includegraphics[width=0.48\textwidth]{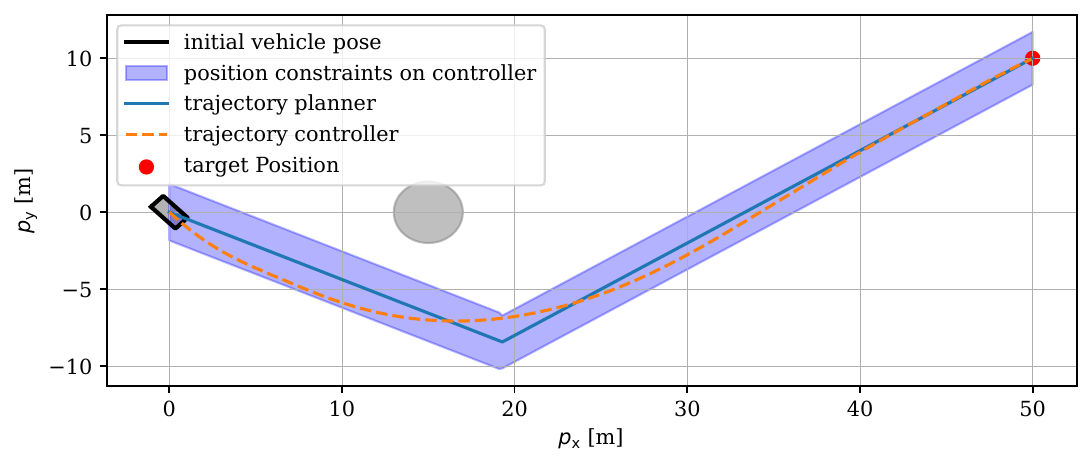}
    \caption{ In this figure, we analyze the reference trajectory generated by a planner incorporating the contract-based design. With the proposed reference from the higher-level controller, $h_{\mathrm{C}}(x(k),r^{\mathrm{H}}_{\cdot|k_{\mathrm{H}}}) = 0$ holds and hence the resulting trajectory is feasible for the lower-level controller. 
    The controller trajectory confirms this plan avoids violating the constraints in \eqref{eq:relative_position_constraints}.}
    \label{fig:cl_sim_wo_collision}
\end{figure}

\section{Conclusion}\label{sec:conclusion}
This paper introduced a contract-based hierarchical control strategy for modularization and safety. 
Using an approximate predictive feasibility value function as a contract, the higher-level controller efficiently evaluates reference trajectory feasibility without needing detailed lower-level model, constraint, or cost function knowledge. 
We validated our method through an autonomous driving case study with a hierarchical planner and motion controller.
%Future research will focus on error analysis and the scalability of the approximation. 
%
\bibliographystyle{IEEEtran}
\bibliography{bib}

\end{document}